\documentclass[a4paper,USenglish,cleveref,thm-restate,table,numberwithinsect]{lipics-v2021}

\usepackage{tikz}
\usetikzlibrary{math, positioning, decorations.pathreplacing, backgrounds, decorations.shapes}
\usepackage{ifthen}
\usepackage{booktabs}
\usepackage{xspace}
\usepackage{todonotes}

\newcommand{\sep}{\textrm{sep}\xspace}
\newcommand{\eps}{\varepsilon}
\newcommand{\Aa}{\mathcal{A}}
\newcommand{\Bb}{\mathcal{B}}
\newcommand{\Ff}{\mathcal{F}}
\newcommand{\NN}{\mathbb{N}}
\newcommand{\PP}{\mathbb{P}}
\newcommand{\ZZ}{\mathbb{Z}}

\DeclareMathOperator{\poly}{poly}
\DeclareMathOperator{\polylog}{polylog}

\newcommand{\Prob}[2][]{%
  \ifthenelse{\equal{#1}{}}%
  {\PP}%
  {\PP_{ #1 }}%
  \left[ #2 \right]
}

\newcommand{\wordrect}[5]{
  \draw[black] (#4 + 0,0) rectangle ++(#1 ,.5);
  \coordinate (#5) at (#4 + #1/2, 0);
  \node (a) at (#4 + 0.6, 0.25) { #3 };
  \draw[black, |<->|] (#4 + 0, 0.6) -- ++(#1 ,0) node[midway, above] { #2 };
}

\theoremstyle{plain}
\newtheorem{fact}[theorem]{Fact}

\title{Separating words with automata in the half-adversarial case}
\author{Gabriel Bathie}{LaBRI, Université de Bordeaux}{gabriel.bathie@gmail.com}{https://orcid.org/0000-0003-2400-4914}{This work was supported by a Postdoctoral Fellowship of the PEPR IA SAIF.}
\date{}

\authorrunning{G. Bathie}
\Copyright{Gabriel Bathie}
\ccsdesc[500]{Theory of computation~Formal languages and automata theory}
\keywords{Finite automaton, Separating words, Transducer}

\hideLIPIcs
\nolinenumbers

\begin{document}
\maketitle

\begin{abstract}
  We consider the problem of separating words with deterministic finite automata (DFA) (Goral{\v{c}}{\'i}k and Koubek, 1986).
  This problem asks: given two distinct words $u,v$ of length at most $n$, what is the size of the smallest DFA that accepts one and rejects the other?
  The best upper bound on the worst-case over all pairs of words of length at most $n$ is $\tilde{O}(n^{1/3})$ states (Chase, 2021), while the best lower bound is $\Omega(\log n)$.

  In this work, we consider the half-random, half-adversarial case: we show that if $u$ is a uniformly random binary word of length $n$, then with high probability, for any word $v$ not equal to $u$, there is a DFA with $O(\log^{7/3} n  \poly\log\log n)$ states that separates $u$ and $v$.
  Our results are based on a novel analysis that exploits the structural sparsity of random words: we show how to apply block-wise compaction with small deterministic transducers to reduce the separation problem to the case of words with short run-length encodings.
\end{abstract}

\section{Introduction}
We consider the problem of separating words with deterministic finite automata (DFA), introduced by Goral{\v{c}}{\'i}k and Koubek~\cite{goralcik1986discerning} in 1986.
This problem asks: given two distinct words $u,v$ of length at most $n$, what is the size (number of states) of the smallest DFA that \emph{separates} them, i.e., accepts one and rejects the other?

A solution to this problem is a tight asymptotic worst-case estimation of the minimal number of states, as a function of the length of $u$ and $v$.
This problem, while easy to state, has proven rather hard to solve: finding a solution has been an open problem for more than 40 years, and there is still a large gap between the state-of-the-art upper and lower bounds.
The state of the art is the following: in 2020, Chase~\cite{chase2020new,chase2021separating} showed that any two distinct words of length at most $n$ can be separated by a DFA with $\tilde{O}(n^{1/3})$ states. For the lower bounds, it is known that there exist words of length $n$ for which $\Omega(\log n)$ states are required~\cite{goralcik1986discerning,demaine2011remarks}.

No better bounds are known in the general case, but several restricted cases have been studied. For example, it is easy to show that if $u$ and $v$ are uniformly random words of length $n$ over a binary alphabet, then with high probability, there exists a DFA with $O(\log n)$ states that separates them~\cite{demaine2011remarks}.

\subsection{Our results}

In this work, we consider the half-random, half-adversarial case, as an intermediate step between the exact problem and the random case, and show that a polylogarithmic number of states is sufficient for this case.
We show that if $u$ is a uniformly random binary word of length $n$, then with high probability, for any word $v$ not equal to $u$, there is a DFA with $O(\log^{7/3} n  \poly\log\log n)$ states that separates $u$ and $v$.

\begin{restatable}{theorem}{mainresult}\label{thm:main}
  There exists a constant $C$ such that for every $n$, if $u$ is a word of length $n$ chosen uniformly at random over $\{a,b\}^n$, we have:
  \[\Prob{\forall v \ne u, \sep(u, v) = O(\log^{7/3} n \poly\log\log n)} \ge 1 - \frac{1}{n^C}.\]
\end{restatable}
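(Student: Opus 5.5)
Fix a threshold $k = \Theta(\log n)$ and split the quantifier over $v$ into cases according to how $v$ differs from $u$. The cheap cases are handled by standard gadgets. If $|u|\ne|v|$, the lengths differ by some $d\le n$, and counting length modulo a small prime (prime number theorem) gives a separating DFA with $O(\log n)$ states. If $v$ and $u$ differ in a position $i$ such that the length-$k$ window of $u$ ending at $i$ occurs only once in $u$, then the standard pattern-matching automaton separates them. This automaton tracks the last $k$ letters through a counter on the first occurrence of a pattern. A union bound over the $O(n^2)$ pairs of positions shows that with probability $1-n^{-C}$ every factor of $u$ of length $c\log n$ is unique, so this case is essentially free. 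The remaining difficulty is that $v$ is adversarial and may have the same length as $u$ while agreeing with $u$ on all short windows around the first difference. Take $i$ to be the first index where $u_i\ne v_i$. The prefixes $u_{<i}$ and $v_{<i}$ coincide, so an automaton reading both words reaches the same state at position $i$. After position $i$, however, $v$ can be arbitrary.

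The main step, which I expect to be the real obstacle, is the block-wise compaction suggested in the abstract. I would partition the words into blocks of length $\ell=\Theta(\log n)$ and build a small deterministic transducer $T$ with $\poly\log n$ states that maps each block to a single symbol, or to a short code. The map should be injective on the blocks that actually appear in $u$ and should send every other block to a ``garbage'' symbol. Because random $u$ has pairwise distinct blocks whp, however, $T(u)$ has no structure to exploit. I would instead use a randomness-based sparsity statement: whp, for every window of length $\polylog n$ and every pattern $p$ of length $\Theta(\log\log n)$, the occurrences of $p$ in $u$ are sparse, with gaps roughly $2^{|p|}$. A transducer that marks occurrences of a fixed short pattern $p$ then turns $u$ into a word whose run-length encoding is short, namely of length $\polylog n$, locally in the region that matters.

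After this compaction I reduce to separating words with short run-length encodings, say $r$ runs of lengths up to $n$. I would separate them by composing $T$ with a DFA that counts runs modulo small primes and run lengths modulo small primes. Since $v$ differs from $u$, for a suitably chosen pattern $p$ the compacted words $T(u)$ and $T(v)$ differ, either in their run counts or in the length of some run, and Chinese remaindering lets $O(\log n)$-size moduli detect the difference. The state count is then $|T|$ times the counting automaton's size. Balancing the pattern length $\Theta(\log\log n)$, the block parameter, and the prime sizes, I would optimize the exponent of $\log n$ toward $7/3$. This mirrors the $n^{1/3}$ balancing in Chase's argument, applied at scale $\log n$ and paid for with $\poly\log\log n$ factors from the prime moduli.

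Finally, I would combine the cases by a union bound over the polynomially many structural events on $u$: uniqueness of long factors and sparsity of every short pattern in every window. Each such event fails with probability $n^{-C'}$. Since the events depend only on $u$, every bad $v$ is covered simultaneously, which gives the bound $1-n^{-C}$ of \cref{thm:main}.
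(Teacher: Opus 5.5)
Your compaction step does not work as described, but it is also unnecessary. Your first case, made precise, already proves the theorem, and with a better bound.

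\textbf{Why the main step fails.} A fixed pattern of length $\Theta(\log\log n)$ occurs about $n/2^{\Theta(\log\log n)}=n/\polylog n$ times in $u$. Marking its occurrences therefore gives a word with $\Theta(n/\polylog n)$ runs. Sparsity ``locally in the region that matters'' buys nothing, because a DFA reads the whole word and cannot restrict attention to a region without counting up to it or having a unique marker for it. The last step also fails. When the run counts agree and only the $j$-th run length differs, counting runs and run lengths modulo small primes does not by itself detect this, since the automaton does not know which run it is in.

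The paper handles both issues at scale $\log n$, not $\log\log n$:
\begin{itemize}
\item It takes $w$ to be the aligned length-$\log n$ block of $u$ containing a mismatch, and $T_w$ maps each block to $a$ or $b$ according to whether it equals $w$.
\item Chernoff plus a union bound make $T_w(u)$ \emph{globally} $O(\log n)$-sparse, while $T_w(u)\neq T_w(v)$.
\item \cref{lemma:sparse} then maps each run to $c$ or $d$ according to its length modulo $p$. This yields distinct words of length $O(\log n)$, to which Chase's $\tilde O(t^{1/3})$ bound is applied.
\end{itemize}

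\textbf{Why your first case already suffices.} The ``remaining difficulty'' you describe is empty. Condition on all factors of $u$ of length $k=\lceil (C+2)\log n\rceil$ being distinct; your union bound gives failure probability at most $n^2 2^{-k}\le n^{-C}$. Let $|v|=n$ and let $i$ be the first index with $u[i]\neq v[i]$.
\begin{itemize}
\item If $i<k-1$, the words differ within their first $k$ letters, and $O(k)$ states suffice.
\item Otherwise let $w=u[i-k+1:i+1]$. Its only occurrence in $u$ ends at $i$. In $v$, no occurrence ends at some $j<i$, since that window lies in the common prefix and is a different factor of $u$. None ends at $j=i$ either, since the last letter differs. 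So the first occurrence of $w$ in $v$ ends at some $j>i$, or there is none.
\end{itemize}
Take the product of the string-matching automaton of $w$ ($k+1$ states) with a position counter modulo a prime $p=O(\log n)$ not dividing $j-i$ (\cref{fact:modp}). Accept iff the first occurrence of $w$ ends at a position $\equiv i \pmod p$. This separates $u$ from $v$ with $O(\log^2 n)$ states.

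Your description needs exactly this fix. An automaton that literally ``tracks the last $k$ letters'' has $2^k=n^{\Theta(1)}$ states. Merely detecting $w$ does not separate, since $v$ may contain $w$ later; what separates is the position of the \emph{first} occurrence modulo $p$. Your phrase about $v$ agreeing with $u$ on the windows around the first difference is also self-contradictory, since $v$'s window ending at $i$ differs from $w$.

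Compared with the paper, this route uses only uniqueness of $\Theta(\log n)$-length factors rather than aligned-block sparsity. It needs neither Chernoff's bound, nor \cref{lemma:sparse}, nor Chase's theorem. It also improves $O(\log^{7/3} n\poly\log\log n)$ to $O(\log^2 n)$. You should drop the compaction part and write up the first case carefully.
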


Our results are based on a novel analysis that exploits the structural sparsity of random words: we show how to apply block-wise compaction with small deterministic transducers to reduce the separation problem to the case of words with short run-length encodings.

\subsection{Related work}

\subparagraph*{Upper bounds, lower bound, and history.}
The separating word problem was first introduced by Goral{\v{c}}{\'i}k and Koubek~\cite{goralcik1986discerning} in 1986, who showed that $o(n)$ states were always sufficient.
This upper bound was later improved by Robson~\cite{robson1989separating} to $O(n^{2/5} \polylog n)$ states in 1989.
No further improvement was found until Chase~\cite{chase2020new} proved in 2020 that $O(n^{1/3} \log^7 n)$ states suffice.

The lower bound of $\Omega(\log n)$ states was already given by Goral{\v{c}}{\'i}k and Koubek~\cite{goralcik1986discerning}, and no further improvement is known in the general case.

It has been conjectured that the asymptotic optimal behavior is logarithmic, i.e., that any pair of distinct words can be separated by a DFA with $O(\log n)$ states. Yakaryilmaz and Montoya~\cite{yakaryilmaz2015discerning} gave empirical evidence supporting this conjecture.

\subparagraph*{Special cases.}
Demaine et al.~\cite{demaine2011remarks} show upper and lower bounds for multiple special cases.
For example, they remark that if the words differ within their first $d$ or last $d$ positions, then they can be separated by automata with $d+2$ states.
They also show that if the Hamming distance between $u$ and $v$ is at most $d$, then they can be separated by an automaton with $O(d\log n)$ states.
This result is somewhat counter-intuitive: the hard-to-separate words are \emph{not} words that are very similar.

\subparagraph*{Variants.}
Several variants of this problem have been considered, such as separation by other kinds of computational devices, e.g., nondeterministic finite automata \cite{demaine2011remarks}, quantum finite automata~\cite{belovs2016can}, context-free grammars~\cite{currie1999separating}, finite groups~\cite{robson1996separating,kuntewar2023separating} and others~\cite{belovs2017conjecture}.
Other variants include separation from \emph{any} starting state ($\forall$-separation) and separation from any \emph{pair} of starting states (one for each word, $\forall^2$-separation)~\cite{tran2022variations,tran2023separating}.

\section{Technical preliminaries}
Throughout this work, $\log$ denotes the logarithm in base $2$.

\subsection{Words, finite automata, and transducers}
\paragraph*{Words}
Let $\Sigma$ be a finite set, which we call the alphabet.
A word over $\Sigma$ is a finite ordered sequence of elements of $\Sigma$.
For $n\ge 0$, the set of words of length $n$ over $\Sigma$ is denoted $\Sigma^n$, and the set of all words over $\Sigma$ is denoted $\Sigma^*$.
Additionally, $\Sigma^{\le n}$ denotes the set of all words of length at most $n$.
The notation $\eps$ denotes the empty word, that is, the unique word of length $0$.
For a word $u \in \Sigma$, $|u|$ denotes its length, and for $0\le i < |u|, u[i]$ denotes its $i$-th element, called its $i$-th \emph{letter}.
Given two words $u$ and $v$, $u\cdot v$ denotes the \emph{concatenation} of $u$ and~$v$. For shorthand, we may write $uv$ for $u\cdot v$.
For $0\le i < j <n$, $u[i:j]$ denotes the word $u[i]\cdot u[i+1] \cdot \ldots \cdot u[j-1]$.
Given a non-negative integer $\alpha$ and a word $u$, $u^\alpha$ denotes the concatenation of $\alpha$ copies of $u$.
We say that a word $v$ is a prefix of $u$ if there exists an index~$i$ such that $v = u[0:i]$.

\paragraph*{Finite automata}
We recall here the notion of deterministic finite automata.
\begin{definition}[Deterministic Finite Automaton]
  A \emph{deterministic finite automaton} (DFA) is a tuple $\Aa =(\Sigma, Q, \delta, q_0, F)$ where:
  \begin{itemize}
    \item $\Sigma$ is a finite alphabet,
    \item $Q$ is a finite set of states, $q_0\in Q$ is the \emph{initial} state, and $F \subseteq Q$ is the set of \emph{accepting} states,
    \item $\delta : Q\times \Sigma \to Q$ is the \emph{transition} function.
  \end{itemize}
  The function $\delta$ inductively extends into a function $\delta^*:Q\times \Sigma^* \to Q$ defined as:
  \begin{align*}
    \forall q\in Q,~&
    \delta^*(q, \eps) = q\\
    \forall q\in Q, \forall a\in \Sigma,~&
    \delta^*(q, a\cdot w) = \delta^*(\delta(q, a), w).
  \end{align*}
  We use $p\xrightarrow{u}q$ as a shorthand to denote the fact that $q = \delta^*(p, u)$, and in this case we say that ``$u$ labels a run from $p$ to $q$ in $\Aa$''.

  Finally, we say that $\Aa$ \emph{accepts} $u$ if $u$ labels a run from the initial state $q_0$ to some accepting state, i.e., if $\delta^*(q_0, u) \in F$.
  The \emph{language recognized by $\Aa$}, denoted L$L(\Aa)$, is the set of words accepted by $\Aa$: $L(\Aa) = \{u: \delta^*(q_0, u)\in F\}$.
\end{definition}
We use $|\Aa|$ to denote the number of states of $\Aa$ (that is, the cardinality of $Q$).

Given two words $u,v$, we say that they are \emph{separated} by $\Aa$ if exactly one of them is accepted by $\Aa$,
or equivalently: $\textrm{card}(\{u,v\} \cap L(\Aa)) = 1$.
The function $\sep: \Sigma^*\times\Sigma^* \to \NN$ maps a pair of words to the number of states of the smallest automaton that separates them:
\[\forall u,v \in \Sigma, \sep(u,v) = \min_{\Aa: \Aa \text{ separates } u, v} |\Aa|.\]
For a non-negative integer $n$, we use $\sep(n)$ to denote the maximum of $\sep(u,v)$ for $u$ and $v$ of length at most $n$.
The Separating Words problem aims to find the asymptotic behavior of~$\sep(n)$: the upper bounds of Chase~\cite{chase2021separating} shows that $\sep(n) = O(n^{1/3} \log^7 n)$, and the best lower bound shows $\sep(n) = \Omega(\log n)$.

\paragraph*{Transducers}
In the proof of \cref{thm:main}, we will apply successive transformations to the words until they become easy to separate.
We want the final computation to be implemented by a finite automaton; hence we want these transformations to be ``computable by a DFA''. This property is formally defined using the notion of \emph{deterministic finite transducers} (DFT)\footnote{In this work, all the transducers that we consider are deterministic and finite, therefore we simply use ``transducers'' to refer to DFTs.}, an extension of deterministic finite automata with outputs.
This is the same notion as \emph{translators} used in~\cite{goralcik1986discerning}.

\begin{definition}[Transducer]
  A \emph{transducer} $T$ is a tuple $(\Sigma_1, \Sigma_2, Q, \delta, \mu, q_0)$ where:
  \begin{itemize}
    \item $\Sigma_1$ and $\Sigma_2$ are finite alphabets,
    \item $Q$ is a finite set of states, $q_0\in Q$ is the \emph{initial} state, and
    \item $\delta : Q\times \Sigma_1 \to Q$ is the \emph{state transition} function and $\mu : Q\times \Sigma_1 \to (\Sigma_2 \cup \{\eps\})$ is the \emph{output} function\footnote{In the usual definition of deterministic transducers, the functions $\delta$ and $\mu$ are presented as a single function $Q\times \Sigma_1  \to Q\times (\Sigma_2^* \cup \{\eps\})$. We choose to separate them to make subsequent definitions easier to state and manipulate.}.
  \end{itemize}
  Similarly to finite automata, $\delta$ and $\mu$ inductively extend into functions $\delta^*:Q\times \Sigma_1^* \to Q$ and $\mu^*:Q\times \Sigma_1^* \to \Sigma_2^*$. The latter is defined as follows:
  \begin{align*}
    \forall q\in Q,~&
    \mu^*(q, \eps) = \eps\\
    \forall q\in Q, \forall a\in \Sigma_1,~&
    \mu^*(q, a\cdot w) = \mu(q, a)\cdot \mu^*(\delta(q, a), w).
  \end{align*}
  Finally, $T$ defines a function $T(\cdot): \Sigma_1^*\to\Sigma_2^*$ given by
  $T(u) = \mu^*(q_0, u)$.
\end{definition}

Next, we show that transducers and automata compose as expected.

\begin{lemma}[Transducer-automaton composition]\label{lemma:compose-transducer}
  Let $T= (\Sigma_1, \Sigma_2, Q_T, \delta_T, \mu_T, q_T)$ be a transducer with $t$ states and let $\Aa = (\Sigma_2, Q_\Aa, \delta_\Aa, q_\Aa, F_\Aa)$ be a finite automaton with~$s$ states.
  There exists an automaton $\Aa_T$ with $s\cdot t$ states that recognizes the language $L = \{u : T(u) \in L(\Aa)\}$.
\end{lemma}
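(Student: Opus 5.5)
We prove this with the standard product construction: the composed automaton simulates $T$ and $\Aa$ in parallel, feeding each output letter of $T$ (if any) to $\Aa$. Concretely, we take $\Aa_T = (\Sigma_1, Q_T \times Q_\Aa, \delta', (q_T, q_\Aa), F')$, where
\[
\delta'((p,q), a) = \bigl(\delta_T(p,a),\ \delta^*_\Aa(q, \mu_T(p,a))\bigr)
\qquad\text{and}\qquad
F' = Q_T \times F_\Aa .
\]
Since $\mu_T(p,a) \in \Sigma_2 \cup \{\eps\}$, the second component either takes a single $\Aa$-transition or stays put, because $\delta^*_\Aa(q,\eps) = q$. The automaton $\Aa_T$ has $|Q_T|\cdot|Q_\Aa| = s\cdot t$ states, as required.

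The key step is the invariant that for every $w \in \Sigma_1^*$ and every $(p,q) \in Q_T\times Q_\Aa$,
\[
\delta'^*((p,q), w) = \bigl(\delta_T^*(p,w),\ \delta_\Aa^*(q, \mu_T^*(p,w))\bigr).
\]
We prove it by induction on $|w|$, using the left-recursive definitions of $\delta^*$ and $\mu^*$ given in the preliminaries. The base case $w=\eps$ is immediate, since $\mu_T^*(p,\eps)=\eps$. For $w = a\cdot w'$, we first unfold one step of $\delta'$ and then apply the induction hypothesis from the state $(\delta_T(p,a), \delta^*_\Aa(q,\mu_T(p,a)))$. We then need the auxiliary fact $\delta_\Aa^*(q, xy) = \delta_\Aa^*(\delta_\Aa^*(q,x), y)$, which is proved by a trivial induction on $|x|$. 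With $x = \mu_T(p,a)$ and $y = \mu_T^*(\delta_T(p,a), w')$, this gives exactly $\delta_\Aa^*(q, \mu_T^*(p, a w'))$ by the definition of $\mu^*$.

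Instantiating the invariant at $(q_T, q_\Aa)$ gives that $u$ is accepted by $\Aa_T$ iff $\delta_\Aa^*(q_\Aa, T(u)) \in F_\Aa$, i.e., iff $T(u) \in L(\Aa)$. There is no real obstacle in this argument. The only point requiring care is the $\eps$-output case and the concatenation lemma for $\delta^*_\Aa$, which we handle explicitly as above.
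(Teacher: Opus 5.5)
Your proposal is correct and matches the paper's proof: the same product automaton on $Q_T \times Q_\Aa$ with initial state $(q_T,q_\Aa)$ and accepting set $Q_T \times F_\Aa$, and the same invariant on $\delta^*$ proved by induction on $w$. Writing the second component as $\delta^*_\Aa(q,\mu_T(p,a))$ also handles the $\eps$-output case more cleanly than the paper does; the paper's convention ``$\delta_\Aa(q_a,\eps)=\eps$'' is a slip for $\delta_\Aa(q_a,\eps)=q_a$.
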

\begin{proof}
  Informally, $\Aa_T$ is a product automaton of $T$ and $\Aa$: it tracks the states of both automata, and when $T$ outputs a letter, it follows the corresponding transition in $\Aa$.

  Formally, we let $\Aa_T = (\Sigma_1, Q, \delta, q_0, F)$, where:
  \begin{itemize}
    \item $Q = Q_T \times Q_\Aa$ and $q_0 = (q_T, q_\Aa)$,
    \item $F = Q_T \times F_\Aa$,
    \item For any $q_t\in Q_T, q_a\in Q_\Aa$ and $x\in\Sigma_1$, $\delta((q_t, q_a), x) = (\delta_T(q_t, x), \delta_\Aa(q_a, \mu(q_t, x)))$, where we slightly abuse notation and define $\delta_\Aa(q_a, \eps) = \eps$.
  \end{itemize}
  One can easily show by induction on $w$ that $\delta^*((q_t, q_a), w) = (\delta_T^*(q_t, w), \delta_\Aa^*(q_a, \mu^*(q_t, w)))$, and it follows that $\Aa_T$ accepts $u$ if and only if $\Aa$ accepts $T(u)$.
\end{proof}

We conclude this section with an observation that follows from \cref{lemma:compose-transducer}: if applying a small transducer to $u$ and $v$ results in easy-to-separate words, then $u$ and $v$ are also easy to separate.

\begin{corollary}[{\cite[Lemma 2]{goralcik1986discerning}}]\label{cor:compose}
  Let $T$ be a transducer with $t$ states, and let $u,v$ be words such that $T(u)$ and $T(v)$ can be separated by an automaton of size $s$.
  Then $u$ and $v$ can be separated by an automaton of size $s\cdot t$.
\end{corollary}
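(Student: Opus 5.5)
The corollary follows almost directly from \cref{lemma:compose-transducer}, so the plan is simply to instantiate that lemma with the separating automaton for the transduced words. By hypothesis there is a DFA $\Aa$ over the output alphabet $\Sigma_2$ with $s$ states that separates $T(u)$ and $T(v)$. Without loss of generality, say $T(u) \in L(\Aa)$ and $T(v) \notin L(\Aa)$; the other case is symmetric, obtained by swapping the roles of $u$ and $v$.

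Applying \cref{lemma:compose-transducer} to $T$ and $\Aa$ yields an automaton $\Aa_T$ with $s\cdot t$ states. This automaton recognizes exactly $\{w : T(w) \in L(\Aa)\}$. Since $T(u) \in L(\Aa)$, we get $u \in L(\Aa_T)$. Since $T(v) \notin L(\Aa)$, we get $v \notin L(\Aa_T)$. Hence $\textrm{card}(\{u,v\}\cap L(\Aa_T)) = 1$, so $\Aa_T$ separates $u$ and $v$, and therefore $\sep(u,v) \le s\cdot t$.

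There is essentially no obstacle here. The one point worth checking is that the construction in \cref{lemma:compose-transducer} correctly handles transitions where $T$ outputs $\eps$. In that case the $\Aa$-component must stay in place, so the intended convention is $\delta_\Aa(q_a,\eps)=q_a$; the printed ``$=\eps$'' is a typo for this. With that reading, the induction establishing $\delta^*((q_t,q_a),w) = (\delta_T^*(q_t,w), \delta_\Aa^*(q_a,\mu^*(q_t,w)))$ goes through, and the corollary holds. If the lemma's state count were not exact, one could pad $\Aa_T$ with unreachable states, but the lemma already gives exactly $s\cdot t$ states.
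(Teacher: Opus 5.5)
Your proof is correct and follows the paper's own route: the corollary is obtained by applying \cref{lemma:compose-transducer} to $T$ and the $s$-state automaton separating $T(u)$ and $T(v)$, and then observing that $\Aa_T$ accepts exactly one of $u$ and $v$. You are also right that the printed convention for $\eps$-outputs is a typo and should read $\delta_\Aa(q_a,\eps)=q_a$.
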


\subsection{Simplifying assumptions}\label{sec:simplifying}
When seeking upper bounds, we can make a number of simplifying assumptions about $u$ and~$v$.
First, we can assume that they both have length $n$, as otherwise they can be separated by an automaton with $O(\log n)$ states, which matches the lower bound.
This observation relies on the following fact.
\begin{fact}[{\cite[Example 6]{SHALLIT199610}}]\label{fact:modp}
  Let $i\ne j$ be integers such that $0\le i,j \le n$ and $n\ge 2$.
  Then there exists a prime number $p = O(\log n)$ such that $i\ne j \pmod{p}$.
\end{fact}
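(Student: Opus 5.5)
The plan is the standard prime-number-theorem argument. Assume, for contradiction, that $i \equiv j \pmod p$ for every prime $p \le x$. Here $x = c\log n$ for a suitable constant $c$ that we fix later. Put $d = |i-j|$. Then $1 \le d \le n$, and $d$ is divisible by every prime $p \le x$. Since these primes are pairwise coprime, their product (the primorial $x\#$) divides $d$, so
\[\prod_{p \le x,\ p \text{ prime}} p \;\le\; d \;\le\; n.\]

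The key step is a lower bound on the primorial. By Chebyshev's estimate, $\theta(x) = \sum_{p\le x}\log p \ge \kappa x$ for some absolute constant $\kappa > 0$ and all $x \ge 2$. Rather than invoke the prime number theorem, I would cite the elementary form: $\binom{2m}{m} \ge 4^m/(2m+1)$, and every prime power dividing $\binom{2m}{m}$ is at most $2m$. Together these give $\operatorname{lcm}(1,\dots,2m) \ge 4^m/(2m+1)$. Since $\log\operatorname{lcm}(1,\dots,N) = \psi(N) \le \theta(N) + O(\sqrt N\log^2 N)$, this yields $\theta(x) \ge \kappa x$ for all large $x$. Small $x$ are absorbed into the constant.

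Conclusion: $\kappa c\log n \le \theta(c\log n) \le \log n$, which is a contradiction once $c > 1/\kappa$. There is one technicality: $x \ge 2$ must hold so that at least one prime exists. For $n \ge 2$ we have $\log n \ge 1$, so choosing $c \ge 2$ suffices. Any remaining small cases are covered by taking the $O(\cdot)$ constant large. Hence some prime $p \le c\log n = O(\log n)$ satisfies $i \not\equiv j \pmod p$.

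The main obstacle is the Chebyshev lower bound. I would cite it as a standard fact rather than reprove it. If a self-contained route is preferred, I would use the lcm argument directly instead. If all primes $p\le x$ fail to separate $i$ and $j$, then $d$ is divisible by those primes but not necessarily by their powers, so the lcm bound alone does not close the argument. That is exactly why $\psi(N)$ has to be converted to $\theta(N)$, and the extra error term there is the only delicate point.
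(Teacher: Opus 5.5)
Your proof is correct. If every prime $p\le x$ divided $d=|i-j|\in[1,n]$, their product would divide $d$, so $\theta(x)\le\log n$, contradicting Chebyshev's bound $\theta(x)\ge\kappa x$ once $x=c\log n$ with $c>\max(2,1/\kappa)$; your derivation of that bound from $\binom{2m}{m}\mid\operatorname{lcm}(1,\dots,2m)$ and $\psi(N)-\theta(N)=O(\sqrt{N}\log^2 N)$ is sound, and it already settles the concern raised in your final paragraph.

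The paper gives no proof of its own and only cites Shallit--Breitbart. Your argument is the standard one behind this fact, and using the prime number theorem in place of Chebyshev would only sharpen the bound to $p\le(1+o(1))\ln n$.
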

If $u$ and $v$ have different lengths bounded by $n$, then by \cref{fact:modp} there exists a prime $p$ such that $|u|\ne|v| \pmod{p}$. Consider the automaton that counts the number of letters in the input word, with one state per residue modulo $p$, and only accepts in the state corresponding to $|u| \pmod{p}$; this automaton accepts $u$ and not $v$, and has $p = O(\log n)$ states.

For similar reasons, we can assume that $u$ and $v$ contain the same number of occurrences of $a$, for each letter $a\in \Sigma$, or even the same number of occurrences of each word $w$ of constant length: otherwise, they could be separated by an automaton that counts occurrences of~$a$ or~$w$ modulo a suitable $p$.

Furthermore, we can assume w.l.o.g. that $u$ and $v$ are binary words.
Indeed, assume that~$u$ and~$v$ are defined over an alphabet $\Sigma$ of size larger than two, and let $\sigma : \Sigma \rightarrow \{a,b\}$ define a morphism over $\Sigma^*$ such that $u' = \sigma(u) \ne \sigma(v) = v'$.
Then, the number of states required to separate $u'$ and $v'$ is at least the number required for $u$ and~$v$: given an automaton~$\Aa$ that separates $u'$ and $v'$, we can build an automaton of the same size that separates $u$ and~$v$ by, given a letter $a$, following the transition labeled by $\sigma(a)$ in $\Aa$.

More formally, if $\Aa = (Q, \{a,b\}, \delta_\Aa, q_0, F)$ is a DFA with $t$ states that separates $u'$ and~$v'$, we can build the automaton $\Bb = (Q, \Sigma, \delta_\Bb, q_0, F)$ that separates $u$ and $v$ by defining $\delta_\Bb(q, a) = \delta_\Aa(q, \sigma(a))$. It follows that
\begin{align*}
  \delta_\Bb(q_0, u) &= \delta_\Aa(q_0, \sigma(u)) = \delta_\Aa(q_0, u') \text { and}\\
  \delta_\Bb(q_0, v) &= \delta_\Aa(q_0, \sigma(v)) = \delta_\Aa(q_0, v').
\end{align*}
By construction, $\Aa$ and $\Bb$ have the same number of states, and since $\Aa$ separates $u'$ and $v'$, hence $\Bb$ separates $u$ and~$v$.
It follows that for any such morphism $\sigma$, we have \[sep(u,v) \le sep(\sigma(u), \sigma(v)).\]

The proof of our main result uses a technique that extends the one used in the above proof: the main difference is that the morphism is applied to blocks of letters instead of a single letter.

More remarks and simplifying assumptions about the problem of separating words can be found in~\cite{demaine2011remarks}.

\section{Proof overview}\label{sec:overview}
In this section, we give a high-level overview of the construction behind our main result.
The formal proof is given in \cref{sec:fullproof}.
\mainresult*

\subparagraph*{Proof idea.}
We start by dividing~$u$ and~$v$ into blocks of length $k = O(\log n)$.
This gives us new words $U, V$ over a larger alphabet $\Sigma = \{a,b\}^k$.
Since $u\ne v$, we also have $U \ne V$, and there exists a position~$i$ such that $U[i] \ne V[i]$. Let $B = U[i]$.
Since~$u$ is random,~$U$ contains at most $O(1)$ copies of each block of length~$k$, in expectation.
Therefore, with probability at least $1- {1}/{n^C}$, $U$ contains at most $O(\log n)$ occurrences of any block, and in particular, $U$ contains at most $O(\log n)$ occurrences of $B$.
Consider the morphism $\sigma : \Sigma \rightarrow \{a,b\}$ defined by:
\begin{align*}
  \sigma(B) &= a\\
  \sigma(B') &= b, \text{ for all } B' \ne B.
\end{align*}
Let $u' = \sigma(U)$ and $v' = \sigma(V)$.
By construction, $u'$ contains $O(\log n)$ occurrences of $a$ (w.h.p.).
Notice that applying the morphism $\sigma$ can be done using a ``transducing'' DFA with $2k = O(\log n)$ states.

Now, $u'$ and $v'$ are of the form
\[
  u' = a^{\alpha_1}b^{\alpha_2}\ldots b^{\alpha_t}, \text{ and }
  v' = a^{\alpha_1'}b^{\alpha_2'}\ldots b^{\alpha_{t'}'},
\]
If $t\ne t'$, we can separate them using $O(\log n)$ states by counting the number of occurrences of $ab$ modulo a small prime $p = O(\log n)$. By composing with the above construction, this gives $O(\log^2 n)$ states.

If $t = t'$, there must exist $i$ such that $\alpha_i \ne \alpha_i'$. Since $\alpha_i, \alpha_i' \le n$, there is a prime $p = O(\log n)$ such that $\alpha_i \ne \alpha_i' \pmod{p}$.
We use another transducing automaton of size $O(p)$ that counts the number of successive $a$'s mod $p$, and upon seeing $ab$, outputs $0$ if the counter is equal to $\alpha_i \pmod{p}$, and $1$ otherwise. (Note: we might have to do this with $b$'s instead if~$i$ is even.)
The resulting words $u''$ and $v''$ are different, and have length $t = O(\log n)$ (since there are $O(\log n)$ $a$'s in $u'$).
Using the construction of Chase~\cite{chase2021separating}, we can separate them using $O(\log^{1/3} n \poly\log\log n)$ states.
Combining all of the above gives us $O(\log^{7/3} n \poly\log\log n)$ states.

\section{The Half-Random, Half-Adversarial Case}\label{sec:fullproof}
In this section, we give a formal proof of our main result.
We will work backwards compared to \cref{sec:overview}, with each step separated as a standalone lemma with the necessary assumption; the main result will follow by combining them.

\subsection{Separating sparse words}\label{sec:sparse}
We first show that if a word $u$ contains a small number of occurrences of any given letter, then it is easy to separate from all other words, since $u$ has a short \emph{run-length encoding}.
We call such a word a \emph{sparse} word.
\begin{definition}
  A word over $\Sigma = \{a,b\}$ is \emph{$t$-sparse} if it contains at most $t$ occurrences of~$a$.
\end{definition}

\begin{lemma}\label{lemma:sparse}
  Let $\Sigma = \{a,b\}$, and let $u$ be a $t$-sparse word of length $n$.
  Then, for any other word $v \in \Sigma^{\le n}$, $u$ and $v$ can be separated by an automaton with $O(\log n \cdot t^{1/3} \polylog t)$ states.
\end{lemma}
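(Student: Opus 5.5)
Following the outline of \cref{sec:overview}, I would reduce to separating two short words via a small transducer, and then invoke Chase's bound. By \cref{sec:simplifying} we may assume $|v| = n$, since otherwise $O(\log n)$ states suffice. Write $u = b^{\beta_0} a b^{\beta_1} a \cdots a b^{\beta_s}$ with $s \le t$, and similarly $v = b^{\gamma_0} a b^{\gamma_1} a \cdots a b^{\gamma_{s'}}$.

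\textbf{Case $s \ne s'$.} If $s \neq s'$ (which includes the case where $v$ is not $t$-sparse), then $|u|_a \ne |v|_a$ and both counts are at most $n$. By \cref{fact:modp} there is a prime $p = O(\log n)$ with $s \not\equiv s' \pmod p$. An automaton counting occurrences of $a$ modulo $p$ separates $u$ and $v$ with $O(\log n)$ states, which is within the bound.

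\textbf{Case $s = s'$.} Otherwise $s = s' \le t$, and since $u \ne v$ there is an index $i$ with $\beta_i \ne \gamma_i$. Both values are at most $n$, so again there is a prime $p = O(\log n)$ with $\beta_i \not\equiv \gamma_i \pmod p$. Define a transducer $T$ with $p$ states that counts consecutive $b$'s modulo $p$. On reading an $a$, it outputs $1$ if the counter equals $\beta_i \bmod p$ and $0$ otherwise, then resets the counter. To handle the last block $b^{\beta_s}$, which is not followed by an $a$, I would rather separate $ua$ from $va$. This needs a small fix: instead, treat the final run via an accepting-state check composed at the end, or observe that a final automaton can read the residual counter. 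The simplest route is to let the final separating automaton be the product of the automaton below with the $p$-state counter, or to handle $i = s$ directly by checking the count of trailing $b$'s mod $p$ with $O(\log n)$ states.

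For $i < s$, $T(u)$ and $T(v)$ are words of length $s \le t$, and they differ at position $i$. By Chase's theorem~\cite{chase2021separating}, they can be separated by an automaton with $O(t^{1/3}\polylog t)$ states. \Cref{cor:compose} then gives an automaton for $u, v$ with $O(\log n \cdot t^{1/3} \polylog t)$ states.

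\textbf{Main obstacle.} The only delicate points are the boundary runs (the first run $\beta_0$ is handled uniformly by the scheme above; the last run $\beta_s$ needs the separate treatment described) and making sure Chase's bound is applied to words of length at most $t$ rather than $n$. Neither is a real obstacle: the substantive content is the compaction by $T$ into a length-$t$ word that still differs from $T(v)$.
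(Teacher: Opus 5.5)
Your proposal is correct and follows essentially the paper's proof: a mismatch in the number of blocks or in the final run is caught by a counter modulo a prime $p=O(\log n)$, and an inner mismatch $\beta_i\ne\gamma_i$ is handled by a transducer with $O(p)$ states that marks each run whose length is $\equiv\beta_i \pmod p$, after which Chase's bound on the compressed words and \cref{cor:compose} conclude. The differences are cosmetic: you encode $u$ by the $b$-gaps between consecutive $a$'s rather than by alternating runs (giving compressed words of length $s\le t$ and a $p$-state transducer, instead of length $\ell-1\le 2t$ and $2p$ states), and you should state the last-run treatment as a clean case split (use the trailing-$b$ counter if $\beta_s\ne\gamma_s$, otherwise pick some $i<s$) --- your abandoned $ua$ versus $va$ idea would also work, since replacing the accepting set $F$ by $\{q:\delta(q,a)\in F\}$ shows $\sep(u,v)\le\sep(ua,va)$.
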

\begin{proof}
  We show that $u$ and $v$ can be transduced to distinct words $u'$ and $v'$ of length $O(t)$ using a DFT with $O(\log n)$ states. Using the result of Chase~\cite{chase2021separating}, $u'$ and $v'$ can be separated by an automaton of size $O(t^{1/3} \poly\log t)$. Composing the automaton and the transducer yields the desired result.

  We consider the \emph{run-length encoding} of $u$ and $v$.
  There exist integers $\ell,\ell'$ and $(\alpha_i)_{i=0,\ldots, \ell}$, $(\alpha_i')_{i=0,\ldots, \ell'}$, such that:
  \begin{itemize}
    \item $u = a^{\alpha_1}b^{\alpha_2}\ldots b^{\alpha_\ell}$,
    \item $v = a^{\alpha_1'}b^{\alpha_2'}\ldots b^{\alpha_{\ell'}'}$,
    \item all $\alpha_i$ and $\alpha_i'$ are strictly greater than $0$, except perhaps the first and last one of each sequence.
  \end{itemize}
  Since $u$ is $t$-sparse, we have $\ell \le 2 t$.
  Next, we consider three cases and show that in each of them, $u$ and $v$ can be separated by a DFA with at most $O(\log n \cdot t^{1/3} \polylog t)$ states.

  \subparagraph*{Case 1: $\ell \ne \ell'$.}
  Then, $u$ and $v$ have a different number of occurrences of $ab$ (or $ba$), and can be separated by an automaton with $O(\log n)$ states that counts the number of occurrences of this substring, using \cref{fact:modp}, as described in \cref{sec:simplifying}.

  \subparagraph*{Case 2: $\ell = \ell', \alpha_\ell \ne \alpha_\ell'$.}
  If one of $\alpha_\ell$ or $\alpha_\ell'$ is $0$, the other is not; hence exactly one of $u$ and $v$ has a $b$ as its last letter, and they can be separated by an automaton with two states.
  Otherwise, if both $\alpha_\ell$ and $\alpha_\ell'$ are non-zero, $u$ and $v$ can be separated by an automaton with $O(\log n)$ states that counts the number of trailing $b$'s modulo some small $p = O(\log n)$, once again using \cref{fact:modp}.

  \subparagraph*{Case 3: $\ell = \ell', \alpha_\ell = \alpha_\ell'$.}
  Then, as $u\ne v$, there exists an index $i < \ell$ such that $\alpha_i \ne \alpha_i'$.
  Furthermore, since $u$ and $v$ have length at most $n$, we have $\alpha_i, \alpha_i' \le n$.
  Therefore, by \cref{fact:modp}, there exists a prime $p = O(\log n)$ such that $\alpha_i \ne \alpha_i' \pmod{p}$.
  We construct a transducer $T: \Sigma \to \{c, d\}$ that counts modulo $p$ the length $L$ of the longest suffix that consists only of~$a$'s or only of $b$'s at the current position; upon seeing the other letter, outputs $c$ if $L = \alpha_i \pmod{p}$ and $d$ otherwise.
  This construction requires $\alpha_\ell = \alpha_\ell'$: since the transducer only outputs a value upon seeing the other letter, it does not output anything for the last segment.

  Let $u' = T(u)$ and $v' = T(v)$; see \cref{fig:proof} for an illustration of the action of $T$ on~$u$ and $v$.
  By construction, for $j<\ell$, the $j$-th letter of $u'$ (resp. $v'$) is $c$ iff $\alpha_j = \alpha_i \pmod{p}$ (resp. $\alpha_j' = \alpha_i \pmod{p}$). Therefore, $u'\ne v'$, and $u'$ and $v'$ both have length $\ell-1 \le 2t$.
  Consequently, we can use the construction of Chase~\cite{chase2021separating} to separate them using $O(t^{1/3} \poly\log t)$ states.

  \begin{figure}[htbp]
    \centering
    \begin{tikzpicture}
      \node (u) at (-0.3, 0.25) {$u$};
      \wordrect{2}{$\alpha_1$}{$aaa\ldots$}{0}{n1}
      \wordrect{1.5}{$\alpha_2$}{$bbb\ldots$}{2}{n2}
      \draw[black] (3.5,0) rectangle ++(1,.5) node[midway] {$\ldots$};
      \wordrect{2}{$\alpha_{i}$}{$aa\ldots$}{4.5}{n3}
      \draw[black] (6.5,0) rectangle ++(1,.5) node[midway] {$\ldots$};
      \wordrect{2.5}{$\alpha_{\ell-1}$}{$aa\ldots$}{7.5}{n4}
      \wordrect{2}{$\alpha_\ell$}{$bbb\ldots$}{10}{n5}

      \begin{scope}[shift={(3.5,-1.5)}]
        \node (u) at (-0.3, 0.3) {$u'$};
        \draw[black] (0.0,0) rectangle ++(.5,.5) node[inner sep=5pt, midway] (b1) {$c$};
        \draw[black] (0.5,0) rectangle ++(.5,.5) node[inner sep=4pt, midway] (b2) {$d$};
        \draw[black] (1.0,0) rectangle ++(1,.5) node[inner sep=5pt, midway] {$\ldots$};
        \draw[black] (2.0,0) rectangle ++(.5,.5) node[inner sep=5pt, midway] (b3) {$c$};
        \draw[black] (2.5,0) rectangle ++(1.,.5) node[inner sep=5pt, midway] {$\ldots$};
        \draw[black] (3.5,0) rectangle ++(.5,.5) node[inner sep=4pt, midway] (b4) {$d$};
      \end{scope}

      \begin{scope}[shift={(3.5,-2.5)}]
        \node (u) at (-0.3, 0.3) {$v'$};
        \draw[black] (0,0) rectangle ++(2,.5) node[midway] {$\ldots$};
        \draw[black] (2.0,0) rectangle ++(.5,.5) node[midway] {$\color{red} d$};
        \draw[black] (2.5,0) rectangle ++(1.5,.5) node[midway] {$\ldots$};
      \end{scope}

      \draw[thick, blue, ->] (n1) to[out=-75, in=120] (b1.north);
      \draw[->] (n2) to[out=-90, in=90] (b2.north);
      \draw[thick, blue, ->] (n3) to[out=-90, in=90] (b3.north);
      \draw[->] (n4) to[out=-90, in=90] (b4.north);
    \end{tikzpicture}
    \caption{Illustration of the action of the transducer $T$ used in Case 3 for separating words with different run-length encodings of the same length.
    Thick blue arrows correspond to indices $j$ such that $\alpha_j \equiv \alpha_i \pmod{p}$, where $i$ is an index such that $\alpha_i \ne \alpha_i'$, therefore $u'$ and $v'$ differ at the corresponding position.}\label{fig:proof}
  \end{figure}

  The transducer $T$ can be implemented with $2p = O(\log n)$ states, using the state set $\{a,b\} \times \ZZ_p$: the first element corresponds to the last letter seen while the second corresponds to the current residue of $L$ mod $p$.
  Therefore, by \cref{cor:compose}, $u$ and $v$ can be separated using $O(\log n \cdot t^{1/3} \polylog t)$ states.
\end{proof}

\subsection{Properties of random words}\label{sec:random}
In this section, we show that there is a family $\Ff$ of transducers of size $O(\log n)$ such that
\begin{enumerate}
  \item with high probability, a random word $u$ is transduced into a $O(\log n)$-sparse word by \emph{every} transducers in $\Ff$, and
  \item for any word $v\ne u$, there exists a transducer $T\in\Ff$ that maps $u$ and $v$ to distinct words.
\end{enumerate}
Composing such a transducer with the result of \cref{sec:sparse} proves \cref{thm:main}.
In this section, we assume that the alphabet is $\Sigma = \{a, b\}$.

Given a word $w$ of length $k$, we define the transducer $T_w$ with the following action:
given an input $u$, it divides it into blocks of length $k$, and maps each block to a single letter: $a$ if the block is equal to~$w$, and $b$ otherwise.
It is formally defined as follows: $T_w = (\Sigma, Q, \delta, \mu, q_0)$, where $Q = {0,\ldots, k-1} \times \{\top, \bot\}$ and $q_0 = (0, \top)$. The state $(i, \top)$ represents the fact that the first $i$ letters of the current block match the prefix of length $i$ of $w$, and $(i, \bot)$ means that it is not the case. Upon reading a block of length $k$, $\mu$ outputs $a$ if the block is equal to~$w$, and $b$ otherwise.
To preserve these properties, $\delta$ and $\mu$ are defined as:
\begin{align*}
  \forall i < k-1,~\delta((i, \top), x) &=
  \begin{cases}
    (i+1, \top)& \text{ if } x = w[i] \\
    (i+1, \bot)& \text{ otherwise}
  \end{cases}\\
  \forall i < k-1,~\delta((i, \bot), x) &= (i+1, \bot)\\
  \delta((k-1, \bot), x) &= \delta((k-1, \top), x) = (0, \top)\\
  \forall i < k-1,~\mu((i, \bot), x) &= \mu((i, \top), x) = \eps\\
  \mu((k-1, \bot), x)&= b\\
  \mu((k-1, \top), x)&=
  \begin{cases}
    a& \text{ if } x = w[k-1] \\
    b& \text{ otherwise}
  \end{cases}
\end{align*}
This transducer has $2k-1$ states (as $(0, \bot)$ is unreachable).
Finally, we define the family $\Ff_k$ as \[\Ff_k = \{T_w : w\in\Sigma^k\}.\]
Next, we turn to proving that it has the desired properties.

\begin{lemma}\label{lemma:random}
  Let $n$ be a large enough integer, and let $C > 0$ be a constant.
  Let $u$ be chosen uniformly at random over $\Sigma^n$, and let $k = \log n$ and let $t = (C+1) \log n$.
  Then, with probability at least $1 - 1/n^C$ (over the choice of $u$), $T(u)$ is a $t$-sparse word for all $T\in\Ff_k$.
\end{lemma}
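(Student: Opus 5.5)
The plan is a union bound over the $2^k = n$ words $w \in \Sigma^k$, combined with a Chernoff (or direct binomial-tail) bound for each fixed $w$.

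Fix $w \in \Sigma^k$. Splitting $u$ into $m = \lfloor n/k \rfloor$ complete blocks $U[0], \ldots, U[m-1]$ of length $k$, the number of occurrences of $a$ in $T_w(u)$ is exactly $X_w = \sum_{j<m} \mathbf{1}[U[j] = w]$. Any trailing incomplete block produces no output, since $T_w$ only outputs upon reaching state index $k-1$. Because $u$ is uniform, the blocks are disjoint sets of positions, so the blocks are independent and each is uniform over $\Sigma^k$. Thus $X_w \sim \mathrm{Bin}(m, 2^{-k})$ with $2^{-k} = 1/n$, and $\mathbb{E}[X_w] = m/n \le 1/\log n \le 1$. (If $\log n$ is not an integer, take $k = \lceil \log n \rceil$; then $2^{-k} \le 1/n$ and $|\Ff_k| \le 2n$, and the argument is unchanged up to constants.)

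Next I bound the tail $\Prob{X_w > t}$ directly rather than through a multiplicative Chernoff bound, since the mean is tiny. Setting $s = \lceil t \rceil$ (or $s = \lfloor t \rfloor + 1$), the union bound over $s$-subsets of blocks gives
\[\Prob{X_w \ge s} \le \binom{m}{s} n^{-s} \le \left(\frac{e m}{s n}\right)^{s} \le \left(\frac{e}{s \log n}\right)^{s}.\]
Since $s \ge (C+1)\log n$ and $n$ is large, the base is at most $1/2$, so the probability is at most $2^{-(C+1)\log n} = n^{-(C+1)}$.

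Finally, a union bound over all $|\Ff_k| = 2^k = n$ transducers gives failure probability at most $n \cdot n^{-(C+1)} = n^{-C}$. If $k$ was rounded up, the count is $2n$ and the factor $2$ is absorbed because the base is much smaller than $1/2$ for large $n$. The only delicate step is the tail estimate. The mean is below $1$, so a standard additive Chernoff bound would be loose, and the subset-counting bound is the clean way to get the exponent $(C+1)\log n$ exactly. Everything else is bookkeeping: checking that $T_w$ counts exactly the complete blocks equal to $w$, and handling the integrality of $k$ and $t$.
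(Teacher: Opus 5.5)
Your proof is correct and follows the same structure as the paper's. You fix $w$, use independence of the $m \le n/\log n$ aligned blocks, bound the tail of a binomial with mean at most $1/\log n$ by $n^{-(C+1)}$, and take a union bound over the $2^k = n$ words $w$. The only differences are these: you derive the tail directly via $\binom{m}{s}n^{-s} \le (em/(sn))^s$, whereas the paper invokes the Chernoff form $\Prob{\sum_i X_i > R} \le 2^{-R}$ for $R \ge 6\mu$ (which is not loose here, despite your remark); and you explicitly handle the incomplete final block and the rounding of $k = \log n$, which the paper glosses over.
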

\begin{proof}
  By the definition of $\Ff_k$, it suffices to show that, with probability at least $1 - 1/n^C$, the word $u$ contains at most $t$ occurrences of any word $w$ of length $k$.
  This follows from a standard concentration-of-measure argument (Chernoff's bound) and the Union Bound.

  Let $B_i$ denote the $i$-th block of length $k$ in $u$; there are $m = n/k$ such blocks in $u$. Since~$u$ is uniformly random, $B_i$ is independent of $B_j$ for any $i\ne j$.
  Fix an arbitrary word $w$ of length $k$.
  Let $X_i$ denote the random variable whose value is $1$ if $B_i$ is equal to $w$, and $0$ otherwise.
  By construction, the probability that $B_i$ is equal to $w$ is $p = 2^{-k}$ (recall that, in this section, we consider a binary alphabet).
  By Chernoff's bound (\cref{thm:chb}), for any $R \ge 6 m \cdot p$, we have:
  \[\Prob{\sum_i X_i > R} \le 2^{-R}.\]
  Plugging in $k = \log n$, $p = 1/n$, $m = n/\log n$ and $R = (C+1) \log n$, we get:
  \begin{align*}
    \Prob{\sum_i X_i > (C+1)\log n}
    &\le 2^{-(C+1)\log n} = \frac{1}{n^{C+1}}.
  \end{align*}
  This holds as soon as $R \ge 6 m \cdot p$, which is equivalent to $\log^2 n \ge \frac{6}{(C+1)}$,
  which is true for $n$ larger than $2^{\sqrt{6/(C+1)}}$.

  It follows that $u$ contains more than $t = (C+1) \log n$ blocks equal to $w$ with probability at most $\frac{1}{n^{C+1}}$.
  Since $k=\log n$, there are $n$ distinct possible values of $w$.
  By the union bound, the probability that there exists a $w$ such that $u$ contains more than $t$ blocks equal to $w$ is~$\frac{1}{n^{C}}$.

  A transducer $T = T_w\in\Ff_k$ maps each block equal to $w$ to the letter $a$, and the others to $b$.
  Therefore, $T(u)$ is $t$-sparse whenever $u$ contains at most $t$ blocks equal to $w$, and this holds simultaneously for all $w$ with probability at least $\frac{1}{n^{C}}$.
\end{proof}

\begin{lemma}\label{lemma:distinct}
  Let $u,v$ be distinct words of length $n$. Then for any $k$, there exists $T\in\Ff_k$ such that
  $T(u) \ne T(v)$.
\end{lemma}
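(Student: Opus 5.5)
The plan is a direct pigeonhole-free argument: find a block on which $u$ and $v$ disagree and take $w$ to be $u$'s version of that block. Write $m = \lfloor n/k\rfloor$. For $0\le j<m$, let $U[j] = u[jk:(j+1)k]$ and $V[j] = v[jk:(j+1)k]$ be the $j$-th blocks, exactly as $T_w$ reads them. I take $k\mid n$, as in \cref{lemma:random}, where $m = n/k$; this makes the blocks $U[0],\dots,U[m-1]$ cover all of $u$, and likewise for $v$.

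First I record what $T_w$ does, by induction on the number of letters read from the definition of $\delta$ and $\mu$. The transducer is in state $(i,\top)$ after reading $jk+i$ letters exactly when the first $i$ letters of block $j$ equal $w[0:i]$, and in $(i,\bot)$ otherwise. It emits nothing on the first $k-1$ letters of a block, and on the last letter of block $j$ it emits $a$ if that block equals $w$ and $b$ otherwise. Hence, for every $w\in\Sigma^k$,
\[T_w(u) = \tau_w(U[0])\,\tau_w(U[1])\cdots\tau_w(U[m-1]),\]
where $\tau_w(B)=a$ if $B=w$ and $\tau_w(B)=b$ otherwise; the same holds for $v$. In particular $|T_w(u)| = |T_w(v)| = m$.

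Since $u\ne v$ and $|u|=|v|=n$, there is a position $p<n$ with $u[p]\ne v[p]$. Setting $j=\lfloor p/k\rfloor < m$ gives $U[j]\ne V[j]$. I choose $w = U[j]$. Then the $j$-th letter of $T_w(u)$ is $\tau_w(U[j]) = a$, while the $j$-th letter of $T_w(v)$ is $\tau_w(V[j]) = b$ because $V[j]\ne w$. So $T_w(u)\ne T_w(v)$, and $T_w\in\Ff_k$ is the required transducer.

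The only delicate step, and the one I expect to be the main obstacle, is guaranteeing that the differing position $p$ falls inside a complete block, i.e.\ that $j<m$. This is automatic when $k\mid n$. If $k\nmid n$, the transducer $T_w$ produces no output on a trailing partial block. Words that agree on their first $mk$ letters and differ only in the last $n \bmod k$ letters are then mapped to the same word by every $T_w$; for instance, when $n<k$ every $T_w$ outputs $\eps$. The statement therefore has to be read with $k\mid n$, the convention of \cref{lemma:random}, and I would make this explicit. The tail case is not covered by \cref{lemma:distinct}. When the lemma is applied in \cref{thm:main}, that case needs separate treatment, for example by the observation of Demaine et al.\ that words differing within their last $d$ positions are separated by a DFA with $d+2 = O(\log n)$ states.
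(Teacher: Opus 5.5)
Your argument is the paper's argument: pick a position $p$ where $u$ and $v$ differ, let $w$ be $u$'s length-$k$ block containing $p$, and note that the corresponding output letter is $a$ in $T_w(u)$ and $b$ in $T_w(v)$. You are more careful than the paper, which speaks of the $i$-th letter of $T(u)$ rather than the $\lfloor i/k\rfloor$-th. Your caveat is also correct. The paper's proof tacitly assumes the block $u[l:l+k]$ lies inside $u$, and when $k$ does not divide $n$ the lemma as stated fails (e.g.\ $u=aaa$, $v=aab$, $k=2$). So restricting to $k \mid n$, and handling a difference in the last $n \bmod k < k$ positions separately with an $O(\log n)$-state DFA, is a sound repair and not a gap in your proof.
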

\begin{proof}
  It suffices to choose $T = T_w$ where $w$ is a block of $u$ that is aligned with a different block in~$v$.

  More formally, as $u\ne v$, there exists an index $i$ such that $u[i] \ne v[i]$. Let $l = k\cdot \lfloor i/k\rfloor$ be the largest multiple of $k$ less than or equal to $i$, and let $r = l + k$.
  For $w = u[l:r]$ and $T = T_w$, we have that the $i$-th letter of $T(u)$ is $a$, while the corresponding letter of $T(v)$ is $b$, hence $T(u) \ne T(v)$.
\end{proof}

\subsection{Separating a random word from an adversarial word}\label{sec:mainproof}
We now turn to proving \cref{thm:main}.
\begin{proof}[Proof of \cref{thm:main}]
  If $u$ is chosen at random, \cref{lemma:random} implies that, with probability at least $1-1/n^C$, any transducer in $\Ff_{\log n}$ maps $u$ to an $O(\log n)$-sparse word.
  This is in particular true for the transducer given by \cref{lemma:distinct} such that $T(u) \ne T(v)$.

  Now, according to \cref{lemma:sparse}, if $u'$ is $O(\log n)$-sparse, then $u'$ can be separated from any $v'\ne u'$ using $O(\log^{4/3} n \poly\log\log n)$ states.

  Composing $T$ with this observation applied to $u' =T(u)$ and $v' = T(v)$ yields a DFA with $O(\log^{7/3} n \poly\log\log n)$ states that separates $u$ and $v$.
\end{proof}

\section{Conclusion}
\subsection{Insights on hard-to-separate words}
The results and the techniques presented in this paper provide some insight into the nature of words that are hard to separate.
The first observation is that, for small $k$, if the block $A = u[i:i+k]$ is not equal to $B = v[i:i+k]$, then, both of these blocks must occur many times in $u$ and $v$; otherwise, applying one of the transducers $T_A$ or $T_B$ leads to a sparse word, which is easy to separate from any other word (\cref{lemma:sparse}).
This is in particular true if \emph{any} block of $u$ (and $v$) appears many times in $u$ and $v$.
This is not the case in a typical random word of length $n$: the expected number of occurrences of a block of length $k$ is $n/(k\cdot 2^k)$.
However, this is the case in words that are suspected to be hard cases, such as prefixes $S[0:n]$ of the Thue--Morse sequence $S$: for each $k$, any factor of length $k$ of $S$ occurs $\Theta(n/k)$ times in $S[0:n]$.

Therefore, it is a compelling idea to try to build hard-to-separate pairs by selecting a few blocks of length $k = \Omega(\log n)$, and interleaving them randomly. Such a construction would ensure that each of these blocks occurs $\Theta(n/k)$ times, and that the resulting words have many mismatches (with high probability), thereby also avoiding the construction from the case of a low-Hamming-distance pair.
However, the second observation that this work provides is that the property of having many occurrences of subwords must be true \emph{hierarchically}, in the sense that, if we map each block to a single letter, the resulting word still has the same property.
This is not the case when randomly interleaving words; instead, the construction should be recursive, as is the case for the Thue--Morse sequence.

\subsection{Future work}
The best known lower bound for the general case remains $\Omega(\log n)$.
Our analysis indicates that candidate hard-to-separate pairs must exhibit scale-invariant factor repetitions to resist block-transducer reductions.
A pressing open question is whether there is a hierarchical (recursive, fractal) construction that produces a counterexample to the $O(\log n)$ conjecture.

\bibliographystyle{plainurl}
\bibliography{biblio}

@article{chase2020new,
  title   = {A new upper bound for separating words},
  author  = {Chase, Zachary},
  journal = {arXiv preprint arXiv:2007.12097},
  year    = {2020}
}

@inproceedings{chase2021separating,
  title     = {Separating words and trace reconstruction},
  author    = {Chase, Zachary},
  booktitle = {Proc. of STOC},
  pages     = {21--31},
  year      = {2021}
}

@inproceedings{demaine2011remarks,
  title        = {Remarks on separating words},
  author       = {Demaine, Erik D and Eisenstat, Sarah and Shallit, Jeffrey and Wilson, David A},
  booktitle    = {International Workshop on Descriptional Complexity of Formal Systems},
  pages        = {147--157},
  year         = {2011},
  organization = {Springer}
}

@article{SHALLIT199610,
  title   = {Automaticity I: Properties of a Measure of Descriptional Complexity},
  journal = {Journal of Computer and System Sciences},
  volume  = {53},
  number  = {1},
  pages   = {10-25},
  year    = {1996},
  issn    = {0022-0000},
  doi     = {https://doi.org/10.1006/jcss.1996.0046},
  url     = {https://www.sciencedirect.com/science/article/pii/S002200009690046X},
  author  = {Jeffrey Shallit and Yuri Breitbart}
}

@article{robson1989separating,
  title     = {Separating strings with small automata},
  author    = {Robson, John M},
  journal   = {Information processing letters},
  volume    = {30},
  number    = {4},
  pages     = {209--214},
  year      = {1989},
  publisher = {Elsevier}
}

@inproceedings{goralcik1986discerning,
  author    = {Goral{\v{c}}{\'i}k, P.
               and Koubek, V.},
  editor    = {Kott, Laurent},
  title     = {On discerning words by automata},
  booktitle = {Automata, Languages and Programming},
  year      = {1986},
  publisher = {Springer Berlin Heidelberg},
  address   = {Berlin, Heidelberg},
  pages     = {116--122},
  isbn      = {978-3-540-39859-2}
}

@inproceedings{yakaryilmaz2015discerning,
  author    = {Yakaryilmaz, Abuzer and Montoya, J. Andres},
  booktitle = {2015 Latin American Computing Conference (CLEI)},
  title     = {On discerning strings with finite automata},
  year      = {2015},
  pages     = {1-5},
  doi       = {10.1109/CLEI.2015.7360021}
}

@article{tran2023separating,
  title   = {Separating Words from Every Start State with Horner Automata},
  author  = {Tran, Nicholas},
  journal = {arXiv preprint arXiv:2309.02766},
  year    = {2023}
}

@inproceedings{tran2022variations,
  title        = {Variations of the Separating Words Problem},
  author       = {Tran, Nicholas},
  booktitle    = {International Conference on Implementation and Application of Automata},
  pages        = {165--176},
  year         = {2022},
  organization = {Springer}
}

@article{belovs2016can,
  title   = {Can one quantum bit separate any pair of words with zero-error?},
  author  = {Belovs, Aleksandrs and Montoya, Juan Andres and Yakary{\i}lmaz, Abuzer},
  journal = {arXiv preprint arXiv:1602.07967},
  year    = {2016}
}

@article{belovs2017conjecture,
  title     = {On a conjecture by Christian Choffrut},
  author    = {Belovs, Aleksandrs and Montoya, J Andres and Yakary{\i}lmaz, Abuzer},
  journal   = {International Journal of Foundations of Computer Science},
  volume    = {28},
  number    = {05},
  pages     = {483--501},
  year      = {2017},
  publisher = {World Scientific}
}

@article{robson1996separating,
  title   = {Separating words with machines and groups},
  author  = {Robson, John Michael},
  journal = {RAIRO-Theoretical Informatics and Applications-Informatique Th{\'e}orique et Applications},
  volume  = {30},
  number  = {1},
  pages   = {81--86},
  year    = {1996}
}

@article{currie1999separating,
  title   = {Separating words with small grammars},
  author  = {Currie, James D. and Petersen, Holger and Robson, John Michael and Shallit, Jeffrey},
  journal = {Journal of Automata, Languages and Combinatorics},
  volume  = {4},
  number  = {2},
  pages   = {101--110},
  year    = {1999}
}

@inproceedings{kuntewar2023separating,
  author    = {Kuntewar, Neha
               and Anoop, S. K. M.
               and Sarma, Jayalal},
  editor    = {Bordihn, Henning
               and Tran, Nicholas
               and Vaszil, Gy{\"o}rgy},
  title     = {Separating Words Problem over Groups},
  booktitle = {Descriptional Complexity of Formal Systems},
  year      = {2023},
  publisher = {Springer Nature Switzerland},
  address   = {Cham},
  pages     = {109--120},
  isbn      = {978-3-031-34326-1}
}

@book{upfal2005probability,
  title     = {Probability and Computing},
  author    = {Eli Upfal and Michael Mitzenmacher},
  year      = {2005},
  publisher = {Cambridge University Press}
}
\appendix

\section{Chernoff's Bound}
\begin{theorem}[{Chernoff's Bound~\cite[Theorem 4.4]{upfal2005probability}}]\label{thm:chb}
  Let $X_i, i = 1,\ldots m$ be i.i.d. Bernoulli random variables with parameter $p$, and let $\mu = m \cdot p$.
  There, for any $R \ge 6 \mu$, we have:
  \[\Prob{\sum_i X_i > R} \le 2^{-R}.\]
\end{theorem}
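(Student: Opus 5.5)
The statement at the end is Chernoff's bound in the form of \cref{thm:chb}: for $X_1,\dots,X_m$ i.i.d. Bernoulli$(p)$ with $\mu = mp$ and any $R\ge 6\mu$, we have $\Prob{\sum_i X_i > R}\le 2^{-R}$. I plan to derive it from the standard multiplicative Chernoff bound via the moment generating function, then simplify using $R\ge 6\mu$.

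\textbf{Step 1 (exponential moment).} Let $X=\sum_i X_i$. For any $\lambda>0$, Markov's inequality applied to $e^{\lambda X}$ gives $\Prob{X\ge R}\le e^{-\lambda R}\,\mathbb{E}[e^{\lambda X}]$. By independence, $\mathbb{E}[e^{\lambda X}]=\prod_i (1+p(e^\lambda-1))$. Using $1+x\le e^x$, this is at most $e^{\mu(e^\lambda-1)}$.

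\textbf{Step 2 (optimise $\lambda$).} Write $R=(1+\delta)\mu$ with $\delta\ge 5$. Choosing $\lambda=\ln(1+\delta)$ gives the classical bound
\[\Prob{X\ge R}\le\left(\frac{e^{\delta}}{(1+\delta)^{1+\delta}}\right)^{\mu}\le \left(\frac{e}{1+\delta}\right)^{(1+\delta)\mu}=\left(\frac{e\mu}{R}\right)^{R}.\]
The middle inequality drops the factor $e^{-\mu}$. If $\mu=0$ the claim is trivial, since then $X=0$ almost surely; so we may assume $\mu>0$.

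\textbf{Step 3 (conclude).} Since $R\ge 6\mu$, we have $e\mu/R\le e/6<1/2$, so the bound is at most $2^{-R}$. This handles $\Prob{X\ge R}$, and $\Prob{X>R}\le\Prob{X\ge R}$ gives the statement.

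The only delicate point is Step 2: keeping the form $(e\mu/R)^R$ rather than the weaker $e^{-\delta^2\mu/3}$ variant, so that the threshold constant $6>2e$ is enough. I expect no real obstacle beyond that.
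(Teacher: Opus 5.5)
Your argument is correct. The paper gives no proof of its own and simply cites Mitzenmacher--Upfal, Theorem 4.4. Your derivation matches the standard proof of that theorem: the moment-generating-function bound with $\lambda=\ln(1+\delta)$, followed by $\bigl(e/(1+\delta)\bigr)^{(1+\delta)\mu}\le (e/6)^R\le 2^{-R}$. Your treatment of $\mu=0$ and of $>$ versus $\ge$ covers the only edge cases.
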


\end{document}